\documentclass{rmmcart}

\usepackage{amssymb}
\usepackage{nccmath}
\usepackage{mathtools}
\usepackage{mathabx}
\usepackage{hyperref}
\usepackage{comment}

\newtheorem{example}{\textbf{Example}}[section]
\newtheorem{remark}{\textit{Remark}}[section]
\newtheorem{lemma}{Lemma}[section]
\newtheorem{corollary}{Corollary}[section]
\newtheorem{proposition}{Proposition}[section]
\newtheorem{theorem}{Theorem}
\numberwithin{theorem}{section}

\title[Volterra integral equations with sum kernels]{On the solutions of linear Volterra equations of the second kind with sum kernels}
\author{Pierre-Louis Giscard}
\address{Universit\'e du Littoral C\^{o}te d'Opale, EA2597 - LMPA - Laboratoire de Math\'ematiques Pures et Appliqu\'ees Joseph Liouville, Calais, France}
\email{giscard@univ-littoral.fr}
\thanks{P.-L. G. thanks Prof. A. Slav\'ik for a fruitful discussion pertaining to product integration as well as an anonymous referee for his/her careful reading of the manuscript. 
This research is supported by the ANR JCJC \textsc{Alcohol} project, ANR-19-CE40-0006.}

\date{\today}

\keywords{Linear Volterra integral equations of the second kind; sum kernels; separable kernel; degenerate kernels; Neumann series; Heun functions}

\subjclass{45D05; 45A05}

\begin{document}
\begin{abstract}
We consider a linear Volterra integral equation of the second kind with a sum kernel $K(t',t)=\sum_i K_i(t',t)$ and give the solution of the equation in terms of solutions of the separate equations with kernels $K_i$, provided these exist. As a corollary, we obtain a novel series representation for the solution with improved convergence properties. 
We illustrate our results with examples, including the first known Volterra equation solved by Heun's confluent functions. This solves a long-standing problem pertaining to the representation of such functions. The approach presented here has widespread applicability in physics via Volterra equations with degenerate kernels. 
\end{abstract}
\maketitle

\section{Introduction}
\subsection{Context and background}
Inhomogenous linear Volterra integral equations of the second kind  are equations in an unknown function of two  variables $f(t',t)$ given by
\begin{equation}\label{Volterra}
f(t',t) = g(t',t)+\int_t^{t'}K(t',\tau)f(\tau,t)d\tau.
\end{equation}
In this expression $g$ and $K$ are arbitrary functions of two variables, $g$ being termed the \emph{inhomogeneity} and $K$ the \textit{kernel} of the equation, respectively. Such equations are of paramount importance in physics, e.g. in designing quantum computing applications \cite{Kayanuma1994, Angelo2005, Zeuch2018, Schmidt2018}. In this case the desirability of analytical expansions for the solution is such that purely numerical approaches are of secondary relevance.\\


In this work, we focus on a particular type of kernels, which we call \textit{sum kernels}, which can be expressed as
$$
K(t',t):=\sum_{i=1}^d K_i(t',t),
$$
and such that the solutions of all the separate Volterra equations with kernels $K_i$ exist. A widespread example of sum kernels of particular interest are the \text{separable} kernels (also called \emph{degenerate} kernels), for which all the $K_i$ satisfy $K_i(t',t)=a_i(t') b_i(t)$. 
From an analytical point of view, inhomogenous linear Volterra equations with general sum kernels are solved indirectly through transformations mapping the equation into a system of coupled linear differential equations with non-constant coefficients \cite{cerha1972, Gripenberg1990, razdolsky2017, Polyanin2008}. The solutions of such systems are in fact themselves non-obvious. To make matter worse, this process is circular, pointless and unsatisfying for (quantum) physics applications for which the Volterra equations originate from reductions of an initial system of coupled linear differential equations! To the best of our knowledge,  whenever $i>1$ there is no known closed analytical form for the solution of a linear Volterra integral equation of the second kind with \emph{arbitrary}\footnote{Special solutions for certain separable kernels have of course been found, we refer the reader to \cite{Polyanin2008} Part I, Chapter 11, for a number of these.} sum kernel, even in the simpler cases of a separable kernel such that all $K_i$ are bounded over all compact subintervals of $I^2$ and smooth functions in both $t$ and $t'$. In all likeliness, such a form does not \textit{in general} exist since examples where the solution is a higher transcendental function have already been found \cite{xie2010}. 

In any case, the best possible analytical approach is therefore to provide a series representation of the solution with `good' convergence and truncation properties. Taylor polynomial expansion have been proposed \cite{sezer1994} but the Neumann series obtained via Picard iteration gives a simpler starting point in this quest, as it is amenable to exact, systematic re-summations based on set theory. It is the purpose of this note to present the resulting re-summed series. For the many numerical techniques developed so far to solve Volterra equations as well as theoretic considerations pertaining to the existence uniqueness and smoothness of the solutions, we refer to \cite{Linz1985, hackbusch1995}.

\subsection{Standard Neumann series formulation of the solutions}
It is convenient to introduce first a short hand notation for integrals such as the one appearing in Eq.~(\ref{Volterra}). 

Let $t$ and $t'$ be two real variables.
Let $f(t',t)$ and $g(t',t)$ be time-dependent generalized functions of the form 
\begin{align*}
f(t',t)=\tilde{f}(t',t)\Theta(t'-t)+\tilde{f}_0(t',t)\delta(t'-t),\\
g(t',t)=\tilde{g}(t',t)\Theta(t'-t)+\tilde{g}_0(t',t)\delta(t'-t),
\end{align*}
with $\tilde{f}$, $\tilde{f}_0$, $\tilde{g}$ and $\tilde{g}_0$ are smooth functions in both variables. Here, $\delta(t'-t)$ is the Dirac delta distribution and $\Theta(\cdot)$ designates the Heaviside theta function, with the convention $\Theta(0)=1$.
Then we define the convolution-like $\ast$-product between $f(t',t)$ and $g(t',t)$ as
\begin{equation}\label{eq:def:*}
  \big(f \ast g\big)(t',t) := \int_{-\infty}^{\infty} f(t',\tau) g(\tau, t) \, \text{d}\tau,
\end{equation}
which evaluates to \cite{schwartz1978}
\begin{align*}
 \big(f \ast g\big)(t',t)&=\tilde{f}_0(t,t)\tilde{g}_0(t,t)\delta(t'-t)+\\
 &\hspace{-15mm}\left(\int_{t}^{t'}\tilde{f}(t',\tau)g(\tau,t)d\tau+\tilde{f}(t',t)\tilde{g}_0(t',t)+\tilde{f}_0(t,t)\tilde{g}(t',t)\right)\Theta(t'-t).
\end{align*}
This definition gives the identity element with respect to the $\ast$-product as the Dirac delta distribution, thus denoted $1_\ast:=\delta(t'-t)$. As a case of special interest for the theory of Volterra equations, consider the special cases where $f(t',t):=\tilde{f}(t',t)\Theta(t'-t)$ and $g(t',t):=\tilde{g}(t',t)\Theta(t'-t)$. From now on, the tilde notation indicates that  $\tilde{f}$ is an \emph{ordinary} function. With the objects $f$ and $g$, the $\ast$-product evaluates to 
\begin{align*}
  \big(f \ast g\big)(t',t) &= \int_{-\infty}^{\infty} \tilde{f}(t',\tau) \tilde{g}(\tau, t)\Theta(t'-\tau)\Theta(\tau-t) \, \text{d}\tau,\\ &=\Theta(t'-t)\int_t^{t'} \tilde{f}(t',\tau) \tilde{g}(\tau, t) \, \text{d}\tau,
\end{align*}
which makes calculations involving such functions easier to carry out, while the form Eq.~(\ref{eq:def:*}) allows a rigorous treatment of calculations involving other distributions such as $1_\ast$. In addition, this shows that for functions of the form of $f$ and $g$ above, if $t$ and $t'$ belong to a specific interval $I\subseteq \mathbb{R}$, then the value of $(f\ast g)(t',t)$ depends only on those of $\tilde{f}$ and $\tilde{g}$ over $I^2$. This remark is important when considering functions $\tilde{f}$ and $\tilde{g}$ that are unbounded on $\mathbb{R}^2$ but might still be bounded over some a region of interest $I^2$.
To benefit from this observation, in the following we often require $\tilde{f}$ and $\tilde{g}$ to be bounded over some region of interest $I^2$ to ensure boundedness of $f\ast g$ over the same region, rather than assume boundedness over $\mathbb{R}^2$.

The $\ast$-product is also well defined for functions which depend on less than two  variables. Let $m(t')$ be a generalized function that depends on only one time variable and $g(t',t)$ be as above.  Then
\begin{align*}
\big(m \ast g\big)(t',t)&= m(t')\int_{-\infty}^{+\infty}  g(\tau, t) \, \text{d}\tau,\\
\big(g\ast m\big)(t',t)&=\int_{-\infty}^{+\infty}  g(t',\tau)m(\tau) \, \text{d}\tau.
\end{align*}
In other terms, the time variable of $m(t')$ is treated as the left variable of a generalized function depending on two variables. This observation extends straightforwardly to constant functions.\\

It is well known that linear Volterra equations of the second kind are solvable using Picard-iterations \cite{Tricomi1985, Gripenberg1990}. The underlying principle is simple: noting that, in $\ast$-product notation, the equation takes on the form
$$
f=g+K\ast f,
$$ 
for kernels of the type $K(t',t)=\tilde{K}(t',t)\Theta(t'-t)$,  we deduce that 
$$
f=g+K\ast f = g+K \ast(g+K\ast f ) = g+K\ast g + K\ast K \ast f =\cdots 
$$ 
Continuing this process yields the Neumann series representation of the solution
\begin{equation}\label{Picard}
f = \left(\sum_{n\geq 0} K^{\ast n}\right)\ast g,
\end{equation}
where $K^{\ast 0}=1_\ast$ is the Dirac delta distribution. 
The same solution holds for matrix-valued Volterra integral equations of the second kind, where the generalized functions $f$, $g$ and $K$ are time-dependent matrices with distributional entries $\mathsf{F}$, $\mathsf{G}$ and $\mathsf{K}=\tilde{\mathsf{K}}(t',t)\,\Theta(t'-t)$. In physical contexts, this was first uncovered by F. Dyson \cite{dyson1952}, who called the solution the time- or path-ordered exponential of $\mathsf{G}$. These objects are of fundamental importance in the field of quantum dynamics, the time-ordered exponential of the Hamiltonian operator dictating the evolution of a quantum system driven by time-varying forces. There, the question of solving Eq.~(\ref{Volterra}) with separable kernels is crucial \cite{Giscard2015}.\\

The solution presented in Eq.~(\ref{Picard}) can also be cast as a $\ast$-resolvent \cite{Linz1985, Tricomi1985, Gripenberg1990}, that is 
\begin{equation}\label{fgRK}
f = \left(1_\ast-K\right)^{\ast -1}\ast g,
\end{equation}
where the inverse is taken with respect to the $\ast$-product. This follows from the observation that the Neumann series Eq.~(\ref{Picard}) is convergent provided $K(t',t)=\tilde{K}(t',t)\Theta(t'-t)$ is such that $\tilde{K}(t',t)$ is an ordinary function of the two variables bounded over all compact subintervals of the interval of interest $(t',t)\in I^2\subseteq \mathbb{R}^2$. Together with the form of Eq.~(\ref{Picard}), Eq.~(\ref{fgRK}) above shows that the whole difficulty in calculating $f=(1_\ast-K)\ast g$ lies in finding the $\ast$-resolvent of $K$, denoted from now on $R_K:=\left(1_\ast-K\right)^{\ast -1}$.

A $\ast$-resolvent such as $R_K$ itself satisfies a linear integral Volterra equation of the second kind with kernel $K$ and inhomogeneity $1_\ast=\delta(t'-t)$, as implied by the usual properties of resolvents. Indeed, since $K\ast R_K= R_K-1_\ast$, we have
$$
R_K=1_\ast+ K\ast R_K.
$$ 
While the Neumann series representation of $R_K=1_\ast+K+K^{\ast2}+\cdots $ is guaranteed to exist and converges for kernels $K=\tilde{K}(t',t)\Theta(t'-t)$ with $\tilde{K}(t',t)$ bounded over all compact subintervals of $I^2$, the speed of convergence and quality of the analytical approximations obtained by truncating this series can be very poor. 
Fortunately, generic properties of $\ast$-resolvents allow exact and systematic re-summations of this series that not only speed-up its convergence but ultimately express $R_K$ in terms of the $\ast$-resolvents $R_{K_i}$ whenever $\tilde{K}(t',t)=\sum_{i}\tilde{K}_i(t',t)$. This approach is particularly very suited to separable kernels for which $\tilde{K}_i(t',t)=\tilde{a}_i(t')\tilde{b}_i(t)$ and thus every $R_{K_i}$ is known exactly.\\

%
%
\section{Algebraic properties of resolvents}
\subsection{Properties of ordinary resolvents}
Let us start by coming back to basic properties of resolvents and inverses. For example, considering an ordinary inverse and $u$ and $v$ two formal variables. We have
\begin{subequations}\label{IterationFormComm}
\begin{equation}
\frac{1}{1-u-v} = \frac{1}{1-u}\times\frac{1}{1-v}+\frac{uv}{(1-u)(1-v)}\times \frac{1}{1-u-v},\label{Form1comm} 
\end{equation}
so that an immediate iterative approach to calculating $1/(1-u-v)$ is 
\begin{align}
\frac{1}{1-u-v} &= \frac{1}{1-u}\times\frac{1}{1-v}+\frac{uv}{(1-u)(1-v)}\times\\
&\hspace{15mm}\left(\frac{1}{1-u}\times\frac{1}{1-v}
+\frac{uv}{(1-u)(1-v)}\cdots\right).\nonumber
\end{align}
\end{subequations}
This iteration leads to the formal series,
\begin{align*}
\frac{1}{1-u-v} =\sum_{k=0}^\infty\frac{(uv)^k}{(1-u)^k(1-v)^k}\frac{1}{1-u}\frac{1}{1-v}.
\end{align*}
These observations extend to resolvents with respect to non-commutative products, in particular the $\ast$-product as we show explicitly below.

\subsection{Consequences for $\ast$-resolvents}\label{Sumof2}
Inspired by the procedure presented above, we may now express the $\ast$-resolvent $R_{K}$ for a kernel $K=\sum_{i=1}^d K_i$ which is a sum of  $d$ kernels $K_i$ in terms of $R_{K_i}$, assuming only that the individual $R_{K_i}$ exist and that the overall kernel $K$ is bounded  over all compact subintervals of $I^2$. We begin by an explicit statement for $\ast$-resolvents of the results of the previous section, but this time involving the sum of any number $d\geq 2$ of kernels.
\begin{theorem}[Sum of $\ast$-resolvents]\label{Kleene}
Let $(t',t)\in I^2$ be two variables and $K_{i}(t',t)=\tilde{K}_i(t',t)\Theta(t'-t)$, $i=1,\cdots,d$ be $d\geq 2$ kernels such all $\tilde{K}_i(t',t)$ are bounded over all compact subintervals of $I^2$.  Let $K$ be the corresponding sum kernel $K:=\sum_{i=1}^dK_i$.
Then all $\ast$-resolvents $R_{K_i}$ exist and the $\ast$-resolvent $R_K$ of $K$ satisfies
\begin{align}\label{IterationForm}
R_K=T\ast R_K+\bigast_{i=1}^d R_{K_{i}}, 
\end{align}
where  
$$
T:=1_\ast-\bigast_{i=1}^d R_{K_{i}}\ast(1_\ast - K),
$$
and there exist and ordinary function $\tilde{T}(t',t)$ bounded over all compact subintervals of $I^2$ such that $T(t',t)=\tilde{T}(t',t)\Theta(t'-t)$
This shows that $R_K$ satisfies an inhomogeneous linear Volterra integral equation of the second kind with kernel  $T$ and inhomogeneity $\bigast_{i=1}^d R_{K_{i}}$. Consequently, $R_K$ is given by the convergent Neumann series
\begin{equation}\label{RkSeries}
R_K=\sum_{k=0}^\infty T^{\ast k}\ast\bigast_{i=1}^d R_{K_{i}},
\end{equation}
where $T^{\ast 0}=1_\ast$.
\end{theorem}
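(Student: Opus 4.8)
The plan is to reduce everything to the Neumann-convergence criterion for kernels of the form $\tilde{K}(t',t)\Theta(t'-t)$ with $\tilde{K}$ bounded on all compact subintervals of $I^2$ (recalled in the introduction), the rest being algebra with the $\ast$-product. First I would note that each $R_{K_i}$ exists: by hypothesis every $\tilde{K}_i$ is bounded on compact subintervals of $I^2$, so the criterion yields convergence of $\sum_{n\geq0}K_i^{\ast n}$ and hence $R_{K_i}=(1_\ast-K_i)^{\ast-1}$; likewise $\tilde{K}=\sum_{i}\tilde{K}_i$ is a finite sum of such functions, hence bounded on compact subintervals, so $R_K$ exists too.

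Next I would fix the structure of the inhomogeneity $\bigast_{i=1}^d R_{K_i}$. Writing $R_{K_i}=1_\ast+\rho_i$ with $\rho_i:=\sum_{n\geq1}K_i^{\ast n}=K_i\ast R_{K_i}$, the explicit $\ast$-product formula of the excerpt shows that a $\ast$-product of two kernels of the form $\tilde{f}\,\Theta(t'-t)$, $\tilde{g}\,\Theta(t'-t)$ is again of that form; combined with the standard bound $|\tilde{K}_i^{\ast n}(t',t)|\leq M_i^{\,n}(t'-t)^{n-1}/(n-1)!$ on a compact subinterval (with $M_i$ a bound for $|\tilde{K}_i|$ there), this gives $\rho_i=\tilde{\rho}_i(t',t)\Theta(t'-t)$ with $\tilde{\rho}_i$ bounded on compact subintervals. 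Expanding $\bigast_{i=1}^d(1_\ast+\rho_i)$, every summand other than the leading $1_\ast$ contains at least one factor $\rho_i$, hence is of the same $\Theta(t'-t)$-type and bounded on compact subintervals; therefore $\bigast_{i=1}^d R_{K_i}=1_\ast+V$ with $V=\tilde{V}(t',t)\Theta(t'-t)$, $\tilde{V}$ bounded on compact subintervals.

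Then I would dispatch the remaining three claims. Substituting the last decomposition into the definition of $T$,
$$
T=1_\ast-(1_\ast+V)\ast(1_\ast-K)=K-V+V\ast K,
$$
and since $K$, $V$ and $V\ast K$ are each of $\Theta(t'-t)$-type and bounded on compact subintervals, so is $T$; in particular $T$ carries no $\delta$-component, which is precisely the asserted form $T=\tilde{T}\,\Theta(t'-t)$ with $\tilde{T}$ bounded on compact subintervals. The identity (\ref{IterationForm}) is then immediate from associativity of $\ast$ together with $(1_\ast-K)\ast R_K=1_\ast$: indeed $T\ast R_K=R_K-\big(\bigast_{i=1}^d R_{K_i}\big)\ast(1_\ast-K)\ast R_K=R_K-\bigast_{i=1}^d R_{K_i}$, so $T\ast R_K+\bigast_{i=1}^d R_{K_i}=R_K$. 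Finally, since $T$ is of $\Theta(t'-t)$-type with $\tilde{T}$ bounded on compact subintervals, the Neumann criterion applies once more and $\sum_{k\geq0}T^{\ast k}$ converges to $(1_\ast-T)^{\ast-1}$; rewriting (\ref{IterationForm}) as $(1_\ast-T)\ast R_K=\bigast_{i=1}^d R_{K_i}$ and applying this resolvent on the left gives $R_K=\sum_{k\geq0}T^{\ast k}\ast\bigast_{i=1}^d R_{K_i}$, which is (\ref{RkSeries}).

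The only step that is not pure rearrangement, and which I expect to be the main obstacle, is the second one: the distributional bookkeeping that the Neumann tails $\rho_i$ are genuine ordinary functions times $\Theta(t'-t)$, bounded on every compact subinterval, and that this regularity is preserved under the finite $\ast$-products producing $V$ and $V\ast K$ — i.e. that $T$ really is a legitimate Volterra kernel with no singular part. Once this is secured, the cancellation $(1_\ast-K)\ast R_K=1_\ast$ does all the remaining work and the convergence of (\ref{RkSeries}) is just the convergence criterion re-applied to $T$.
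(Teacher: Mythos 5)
Your proposal is correct and follows essentially the same route as the paper: the identity $R_K=T\ast R_K+\bigast_i R_{K_i}$ comes from the cancellation $(1_\ast-K)\ast R_K=1_\ast$, the regularity of $T$ comes from writing each $R_{K_i}=1_\ast+K_i\ast R_{K_i}$ and checking that no isolated $1_\ast$ survives the expansion, and convergence of $\sum_k T^{\ast k}$ comes from the factorial bound on iterated $\ast$-powers of $\Theta$-type kernels. The only cosmetic difference is that you obtain the series by left-multiplying $(1_\ast-T)\ast R_K=\bigast_i R_{K_i}$ by $(1_\ast-T)^{\ast-1}$ rather than by Picard iteration, which is an equivalent (arguably tidier) packaging of the same argument.
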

%

\begin{proof}
The result follows from direct algebraic manipulations. With $T$ as defined above we have
$$
T\ast R_K = 1_\ast \ast R_K-\bigast_{i=1}^d R_{K_i}\ast (1_\ast - K)\ast R_K
$$
but $(1_\ast - K)\ast R_K=1_\ast$, $1_\ast \ast R_K=R_K$ and $\bigast_{i=1}^d R_{K_i}\ast 1_\ast = \bigast_{i=1}^d R_{K_i}$. Thus
$$
T\ast R_K = R_K-\bigast_{i=1}^d R_{K_i}
$$
so that $T\ast R_K+\bigast_{i=1}^d R_{K_i}=R_K$. To obtain the series representation of $R_K$ we perform a Picard iteration, assuming convergence for the time being. Iteratively replacing $R_K$ by its expression in Eq.(\ref{IterationForm}),
\begin{align*}
 R_K&=T\ast \left(T\ast \left(T\ast (\cdots)+\bigast_{i=1}^d R_{K_i}\right)+\bigast_{i=1}^d R_{K_i}\right)+\bigast_{i=1}^d R_{K_i},\\
 &=\left(1_\ast+T+T^{\ast 2} + T^{\ast 3} + \cdots\right)\bigast_{i=1}^d R_{K_i}.
\end{align*}
To prove convergence of the above Neumann series, we first reformulate $T$ so as to show that it is of the form $\tilde{T}(t',t)\Theta(t'-t)$ by explicitly removing the distribution $1_\ast$:
\begin{align*}
T&=1_\ast - \bigast_{i=1}^d R_{K_i} +\bigast_{i=1}^d R_{K_i}\ast K,\label{TForm}\\
&=1_\ast - \bigast_{i=1}^d (1_\ast+K_i\ast R_{K_i})+\bigast_{i=1}^d R_{K_i}\ast K.
\end{align*}
Expanding the above products demonstrates that no isolated $1_\ast$ remains.
Now because all $\tilde{K}_i$ are bounded over  all compact subintervals of $I^2$, all the Neumann series $\sum_k K_i^{\ast k}$ converge and thus $R_{K_i}=1_\ast + \tilde{R}_{K_i}(t',t)\Theta(t'-t)$ with $\tilde{R}_{K_i}$ bounded  all compact subintervals of $I^2$. Since furthermore $K=\tilde{K}(t',t)\Theta(t'-t)$ is also bounded, the above result implies that $T=\tilde{T}(t',t)\Theta(t'-t)$ with $\tilde{T}$ bounded over  all compact subintervals of $I^2$. We now observe that since $\Theta(t'-t)^{\ast k } =(t'-t)^{k-1}/(k-1)!\times \Theta(t'-t)$, $T^{\ast k}$ is bounded over  all compact subintervals of $I^2$ by $C_T^k(t'-t)^{k-1}/(k-1)!$, where $$
C_T=\sup_{t',t\in I}|\tilde{T}(t',t)|,
$$ 
is a constant that bounds $T$ over $I^2$. Thus the Neumann series $\sum_k T^{\ast k}$ is convergent. 
\end{proof}

A few remarks are now in order regarding the results of Theorem~\ref{Kleene}:
\begin{remark}\label{Order}
The order in which the individual resolvents $R_{K_i}$ appear in the above products is irrelevant so long as the quantity $T$ is defined correspondingly, i.e. with the same order of the products. Consider an example involving 2 resolvents $K_1$ and $K_2$. Let $R_K=(1_\ast - K_1-K_2)^{\ast-1}$ and
\begin{align*}
T_{12}&=1_\ast-R_{K_1}\ast R_{K_2}\ast(1_\ast - K_1-K_2),\\
T_{21}&=1_\ast-R_{K_2}\ast R_{K_1}\ast(1_\ast - K_1-K_2).
\end{align*}
Then we verify by direct substitution that 
$$
R_K=T_{12}\ast R_K+K_1\ast K_2=T_{21}\ast R_K+K_2\ast K_1.
$$
In general, equally valid formulas are obtained upon permuting the individual resolvents  in the theorem results,  owing ultimately to the exchange symmetry of the resolvent $R_{K}$.
\end{remark}

\begin{remark}\label{FormT}
The quantity $T$ arising in the expansion of the $\ast$-resolvent $R_K$ of a sum kernel $K=\sum_{i=1}^d K_i$ admits many alternative forms that can prove useful in determining bounds on $\sup_{t',t\in I^2}|T(t',t)|$. To see this, we first consider an example, that of the sum of three kernels $K=K_1+K_2+K_3$. Then 
\begin{align*}
T&=1_\ast - R_{K_3}\ast R_{K_2}\ast R_{K_1} \ast (1_\ast - K_1 - K_2 -K_3),\\
&= R_{K_3}\ast R_{K_2}\ast R_{K_1}\ast\\
&\hspace{10mm}\big((1_\ast-K_1)\ast (1_\ast - K_2)\ast(1_\ast - K_3)-1_\ast + K_1+K_2+K_3\big),\\
&=R_{K_3}\ast R_{K_2}\ast R_{K_1}\ast\big(K_1\ast K_2+K_1\ast K_3+K_2\ast K_3 - K_1\ast K_2\ast K_3\big).
\end{align*}
Here, the crux of the proof is to write $1_\ast=R_{K_3}\ast R_{K_2}\ast R_{K_1}\ast(1_\ast-K_1)\ast (1_\ast - K_2)\ast(1_\ast - K_3)$. Of course, we can always do so, which yields the following general expression for the quantity $T$ obtained when expressing the $\ast$-resolvent of $K=\sum_{i=1}^d K_i$,
$$
T=\bigast_{i=1}^d R_{K_i} \ast \sum_{n=2}^d (-1)^n \sum_{i_1<i_2<\cdots<i_n}K_{i_1}\ast K_{i_2}\ast \cdots \ast K_{i_n}
$$
with the first product understood as $\bigast_{i=1}^d R_{K_i}=R_{K_d}\ast \cdots R_{K_2}\ast R_{K_1}$. An equally valid expression holds upon reversing the orders in the products above, in accordance with Remark~\ref{Order}.

\end{remark}

The advantages of the theorem's results for calculating $\ast$-resolvents are as follows:\\[-1.5em]
\begin{enumerate} 
\item[i)] They hold for any number of kernels functions in the overall resolvent $R_{K}=(1_\ast-K_1-K_2-K_3-\cdots)^{\ast-1}$ so long as each is of the form $K_i(t',t)=\tilde{K}_i(t',t)\Theta(t'-t)$ with $\tilde{K}_i$ an ordinary function bounded over  all compact subintervals of$I^2$;
\item[ii)] Solving a Volterra equation with sum kernel only necessitates knowing the $\ast$-resolvents of the individual kernels $K_i$, a huge advantage in the case of degenerate kernels;
\item[iii)] The theorem provides a fully explicit series which is a re-summed form of the original Neumann series representation of the solution; 
\item[iv)] The theorem yields an improvement of the convergence speed over the original Neumann series especially when one or more kernel $K_i$ dominates over the others, see \S\ref{Converge} and the examples of \S\ref{section:Examples}.
\item[v)] It remains valid should all ordinary functions $\tilde{K}_1,\,\tilde{K}_2,\cdots$ be time-dependent matrices\footnote{In this case, $\ast$-resolvents such as $(1_\ast-\mathsf{V})^{\ast-1}(t',t)$ should be understood as the time derivative with respect to $t'$ of the ordered exponential of the time-dependent matrix $\mathsf{V}(t',t)=\tilde{\mathsf{V}}(t',t)\Theta(t'-t)$};\\[-.5em]
\end{enumerate}




\section{Main result: application to Volterra equations}
We may now formulate our general results for the specific purpose of solving linear inhomogenous Volterra equations of the second kind for general sum kernels. 

\begin{corollary}\label{ReSummedNeumann}
Let $I\subseteq \mathbb{R}$, $(t',t)\in I^2$, $d\in\mathbb{N}=\{1,2,3\cdots\}$ and $K(t',t) := \sum_{i=1}^{d} \tilde{K}_i(t',t)\Theta(t'-t)$ be a sum kernel, in particular all $\ast$-resolvents $R_{K_i}$ must exist. Let $g(t',t)$ be a generalized function that is not identically null over $I^2$ and let
$$
T:=1_\ast-\bigast_{i=1}^d R_{K_{i}}\ast(1_\ast - K).
$$
Then the generalized function $f$, which is solution of the linear Volterra integral equation of the second kind with kernel $K$ and inhomogeneity $g$, $f = g+K\ast f$, is given by
\begin{align*}
f &
 = \left(\sum_{k=0}^\infty T^{\ast k}\right) \bigast_{i=1}^{d}R_{K_i}\ast g,
\end{align*}
with $T^{\ast 0}=1_\ast$.
%
%
Let $f^{(n-1)}$ be the series as above truncated at order $n-1\geq 0$, that is 
$$
f^{(n-1)}:=\left(\sum_{k=0}^{n-1} T^{\ast k}\right) \bigast_{i=1}^{d}R_{K_i}\ast g
$$
Then, the truncation error is exactly
\begin{equation}\label{ErrorT}
f-f^{(n-1)} = T^{\ast n}\ast f.
\end{equation}
\end{corollary}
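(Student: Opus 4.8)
The plan is to deduce both assertions from the resolvent representation $f=R_K\ast g$ of the solution together with Theorem~\ref{Kleene}; in effect the corollary is Theorem~\ref{Kleene} re-expressed for an arbitrary inhomogeneity $g$ rather than $g=1_\ast$. First I would recall that, since each $\tilde K_i$ is a summand of the sum kernel $K=\tilde K(t',t)\Theta(t'-t)$ and hence $\tilde K$ is bounded over all compact subintervals of $I^2$, the Volterra equation $f=g+K\ast f$ has the unique solution $f=R_K\ast g$ with $R_K=(1_\ast-K)^{\ast-1}$, as in Eq.~(\ref{fgRK}). Substituting the series representation~(\ref{RkSeries}) of $R_K$ from Theorem~\ref{Kleene} and using associativity of the $\ast$-product to move $g$ to the far right gives
\[
f=R_K\ast g=\left(\sum_{k=0}^\infty T^{\ast k}\ast\bigast_{i=1}^d R_{K_i}\right)\ast g=\left(\sum_{k=0}^\infty T^{\ast k}\right)\bigast_{i=1}^d R_{K_i}\ast g,
\]
which is the claimed formula; the term-by-term rearrangement is licit because Theorem~\ref{Kleene} has already shown that $T=\tilde T(t',t)\Theta(t'-t)$ with $\tilde T$ bounded on compact subintervals and $\sum_k T^{\ast k}$ absolutely and uniformly convergent there. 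The case $d=1$ is immediate and handled separately: then $R_{K_1}\ast(1_\ast-K_1)=1_\ast$, so $T=0$, the sum collapses to its $k=0$ term, and $f=R_{K_1}\ast g$.

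For the truncation error I would argue directly on the convergent series. Setting $h:=\bigast_{i=1}^d R_{K_i}\ast g$, so that $f=\sum_{k\ge0}T^{\ast k}\ast h$ and $f^{(n-1)}=\sum_{k=0}^{n-1}T^{\ast k}\ast h$, I subtract and factor $T^{\ast n}$ out of every term of the tail:
\[
f-f^{(n-1)}=\sum_{k=n}^\infty T^{\ast k}\ast h=T^{\ast n}\ast\!\left(\sum_{j=0}^\infty T^{\ast j}\ast h\right)=T^{\ast n}\ast f.
\]
An alternative I might prefer for exposition is to read off from Eq.~(\ref{IterationForm}) the fixed-point identity $f=T\ast f+h$, note that $f^{(n)}=T\ast f^{(n-1)}+h$ while $f^{(0)}=h$, and then induct on $n$: the base case $f-f^{(0)}=f-h=T\ast f$ is exactly~(\ref{ErrorT}) for $n=1$, and $f-f^{(n)}=T\ast\big(f-f^{(n-1)}\big)$ propagates it.

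The only point requiring care is that all the series manipulations above — moving $g$ through the sum, re-indexing the tail, factoring out $T^{\ast n}$ — are legitimate precisely because of convergence, and this is exactly what Theorem~\ref{Kleene} secures: $T$ is an ordinary kernel with no residual $1_\ast$, bounded over all compact subintervals, so each $T^{\ast k}$ is bounded there by $C_T^k(t'-t)^{k-1}/(k-1)!$ and the Neumann series in $T$ converges. Hence there is no genuine obstacle; the proof is a short piece of $\ast$-algebra riding on Theorem~\ref{Kleene}, the only minor bookkeeping being the $d=1$ degenerate case and the observation that the hypothesis that $g$ is not identically null over $I^2$ plays no role in the identities themselves (it is there only to make the subsequent convergence comparison meaningful).
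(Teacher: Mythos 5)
Your proposal is correct and follows essentially the same route as the paper's own proof: identify the solution as $f=R_K\ast g$, substitute the series representation of $R_K$ from Theorem~\ref{Kleene}, and obtain the truncation error by factoring $T^{\ast n}$ out of the tail of the convergent series. Your explicit treatment of the degenerate case $d=1$ (where Theorem~\ref{Kleene} as stated requires $d\geq 2$ but $T$ collapses to $0$) is a small point the paper glosses over, but it does not change the substance of the argument.
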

\begin{proof}
For the first statement, observe that $R_K$ solves the Volterra equation $R_K=1_\ast+K\ast R_K$. Thus, letting $f:=R_K\ast g$, we have  $f= (1_\ast+K\ast R_K)\ast g = g+ K \ast R_K \ast g=g+ K\ast f$, that is $f$ solves the the linear Volterra integral equation of the second kind with kernel $K$ and inhomogeneity $g$. This observation, together with the form of $R_K$ as given by Eq.~(\ref{RkSeries}) of Theorem~\ref{Kleene}, yields the first result. 

For the truncation error, we obtain it by direct calculation
\begin{align*}
T^{\ast n} \ast f&=T^{\ast n}\ast \sum_{k=0}^{\infty} T^{\ast k} \bigast_{i=1}^{d}R_{K_i}\ast g,\\
&=\sum_{k=n}^{\infty} T^{\ast k} \bigast_{i=1}^{d}R_{K_i}\ast g,\\
&=\left(\sum_{k=0}^{\infty} T^{\ast k}-\sum_{k=0}^{n-1} T^{\ast k}\right) \bigast_{i=1}^{d}R_{K_i}\ast g,\\
&=f-f^{(n-1)}.
\end{align*}
%
%
\end{proof}

Corollary~\ref{ReSummedNeumann} is especially well suited to separable kernels, for in such cases all $R_{K_i}$ are exactly available. Recall that a separable kernel takes on the form 
\begin{align}\label{SepKernel}
&K(t',t)=\tilde{K}(t',t)\Theta(t'-t),\qquad\tilde{K}(t',t) = \sum_{i=1}^d \tilde{K}_i(t',t),\\
&\text{with }~\tilde{K}_i(t',t):= \tilde{a}_i(t') \tilde{b}_i(t).\nonumber
\end{align}
Here all $\tilde{a}_i(t')$ and $\tilde{b}_i(t)$ ordinary functions of a single time variable.
For the sake of simplicity, we assume  that all $\tilde{a}_i$ and $\tilde{b}_i$ are bounded and smooth over $I$. Although these conditions can be relaxed following \cite{Linz1985}, they are satisfied in e.g. all quantum physics applications. Now, all $\ast$-resolvents $R_{K_i}:=\big(1_\ast-K_i\big)^{\ast-1}$ are available in closed form. This result is almost certainly already known although we could not locate it in the literature. We provide it here with a proof for the sake of completeness:

%

\begin{proposition}\label{SingleSeparable}
Let $I\subset \mathbb{R}$ and let $(t',t)\in I^2$ be two  variables and let $g(t',t)$ be a generalized function of $t',t$. Let $f(t',t)=\tilde{f}(t',t)\Theta(t'-t)$ be a function of $t',t$ over $I^2$ and $K(t',t):=\tilde{a}(t')\tilde{b}(t)\Theta(t'-t)$. Let $\tilde{\alpha}:=\int \tilde{K}(\tau,\tau) d\tau = \int \tilde{a}(\tau)\tilde{b}(\tau) d\tau$. Then the solution $f$ of the linear Volterra equation of the second kind $f = g + K\ast f$ with kernel $K$ is
\begin{align}
f(t',t)& = g(t',t) +\nonumber\\
&\hspace{5mm} \tilde{a}(t')\int_{-\infty}^{\infty}\tilde{b}(\tau) \exp\left(\int_{\tau}^{t'} \tilde{a}(\tau')\tilde{b}(\tau') d\tau'\right) \Theta(t'-\tau) g(\tau,t) \,d\tau.\label{RSingleSeparable}\end{align}
\end{proposition}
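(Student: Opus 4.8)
The plan is to compute the $\ast$-resolvent $R_K = (1_\ast - K)^{\ast -1}$ for the single separable kernel $K(t',t) = \tilde a(t')\tilde b(t)\Theta(t'-t)$ in closed form, and then invoke the observation already used in Corollary~\ref{ReSummedNeumann} that $f = R_K \ast g$ solves $f = g + K\ast f$. So the whole task reduces to establishing that
$$
R_K(t',t) = 1_\ast + \tilde a(t')\,\exp\!\left(\int_t^{t'}\tilde a(\tau')\tilde b(\tau')\,d\tau'\right)\tilde b(t)\,\Theta(t'-t),
$$
since convolving this against $g$ and using the one-variable conventions for $\ast$ reproduces Eq.~(\ref{RSingleSeparable}) (the $1_\ast$ term gives back $g$, and the $\Theta$-supported term gives the integral, with $\Theta(t'-\tau)$ appearing because the resolvent's regular part carries a $\Theta(t'-t)$).

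First I would exploit the degeneracy directly rather than resumming the Neumann series term by term. Write $K = a \ast b$ where, abusing the one-variable conventions of the paper, $a$ stands for the function $\tilde a(t')$ (treated as a left-variable object) and $b$ for $\tilde b(t)$ (a right-variable object); more carefully, $K(t',t) = \tilde a(t')\tilde b(t)\Theta(t'-t)$, and I would show $K^{\ast n}$ collapses because the middle integrations each produce a copy of $\tilde\alpha' := \tilde a\tilde b$ integrated over a nested simplex. Concretely, $K^{\ast n}(t',t) = \tilde a(t')\,\big(\tfrac{1}{(n-1)!}\big(\int_t^{t'}\tilde a(\tau)\tilde b(\tau)\,d\tau\big)^{n-1}\big)\,\tilde b(t)\,\Theta(t'-t)$ for $n\ge 1$; this is a routine induction on $n$ using $\Theta^{\ast k}(t',t) = (t'-t)^{k-1}/(k-1)!\,\Theta(t'-t)$-type simplex identities, except here the ``volume'' element is $\tilde a(\tau)\tilde b(\tau)\,d\tau$ rather than $d\tau$. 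Summing over $n\ge 1$ and adding the $n=0$ term $1_\ast$ gives the exponential, since $\sum_{n\ge 1} x^{n-1}/(n-1)! = e^x$. Convergence of the Neumann series is guaranteed by the boundedness hypotheses on $\tilde a,\tilde b$ over $I$ together with the general argument recalled in the introduction, so the interchange of sum and the manipulations are legitimate.

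An alternative route, which I might present instead as it is cleaner, is to verify the claimed $R_K$ directly: plug the ansatz $R_K = 1_\ast + \tilde a(t')\,e^{\int_t^{t'}\tilde a\tilde b}\,\tilde b(t)\,\Theta(t'-t)$ into $R_K = 1_\ast + K\ast R_K$ and check equality by evaluating the $\ast$-product with the explicit formula from Eq.~(\ref{eq:def:*}); differentiating the integral $\int_t^{t'}\tilde a(\tau)e^{\int_\tau^{t'}\tilde a\tilde b}\tilde b(\tau)\,d\tau$ with respect to $t'$ is the one-line computation that closes this, using $\frac{d}{dt'}\int_\tau^{t'}\tilde a\tilde b = \tilde a(t')\tilde b(t')$. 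Then $f := R_K\ast g$ solves the equation, and unwinding $R_K\ast g$ with the one-variable $\ast$-conventions yields exactly Eq.~(\ref{RSingleSeparable}).

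The main obstacle is purely bookkeeping: keeping the distributional $1_\ast$ piece and the $\Theta$-supported piece straight through the $\ast$-products, and making sure the $\Theta(t'-\tau)$ in the final statement is correctly accounted for (it comes from the $\Theta(t'-t)$ carried by the regular part of $R_K$, which becomes $\Theta(t'-\tau)$ after the convolution variable is relabeled). There is no analytic difficulty — boundedness and smoothness of $\tilde a,\tilde b$ over $I$ make every integral and every series here well-behaved — so the proof is essentially the induction (or the direct verification) plus careful substitution.
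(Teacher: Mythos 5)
Your primary route — induction showing $K^{\ast n}(t',t)=\tilde a(t')\tilde b(t)\,\bigl(\int_t^{t'}\tilde a\tilde b\bigr)^{n-1}\!/(n-1)!\;\Theta(t'-t)$, summing the Neumann series to get the exponential, and then $\ast$-multiplying by $g$ on the right — is exactly the paper's proof, with $\int_t^{t'}\tilde a\tilde b=\tilde\alpha(t')-\tilde\alpha(t)$ being the only notational difference. The proposal is correct and takes essentially the same approach.
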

\begin{remark}\label{RemSingleSep}
In the (typical) case where $g$ itself takes on the form $g(t',t)=\tilde{g}(t',t)\Theta(t'-t)$, in the expression of Eq.~(\ref{RSingleSeparable}), $g(\tau,t)$ can be replaced with $\tilde{g}(\tau,t)$ with the outer integral running from $t$ to $t'$.
If instead one chooses $g(t',t)=\delta(t'-t)$, then the Volterra equation satisfied by $f$ reads $f=1_\ast+K\ast f$, that is $f$ is the $\ast$-resolvent of $K$, $f=R_K$ and Eq.~(\ref{RSingleSeparable}) simplifies to
\begin{align*}
  R_{K}(t',t)&=\delta(t'-t)+\tilde{a}(t')\tilde{b}(t) e^{\tilde{\alpha}(t')-\tilde{\alpha}(t)}\Theta(t'-t)\\
  &=\delta(t'-t)+\tilde{K}(t',t)e^{\tilde{\alpha}(t')-\tilde{\alpha}(t)}\Theta(t'-t).
\end{align*}
In other terms, the $\ast$-resolvent of a kernel of the form $K(t',t)=\tilde{a}(t')\tilde{b}(t)\Theta(t'-t)$ is exactly available in closed form. 
\end{remark}
%
 
\begin{proof}
We proceed by induction on the Neumann series $f=\big(\sum_n K^{\ast n}\big)\ast g$. Convergence of this series is guaranteed whenever $\tilde{a}$ and $\tilde{b}$ are bounded over all compact subintervals of $I$, however existence of the final form for $f$ is clearly independent from this assumption. In this situation this form can be understood as the analytic continuation of the original Neumann series. 

The proposition to be shown is 
\begin{align*}
P_n:=\Big\{\forall n\in\mathbb{N}:&~K^{\ast n}(t',t)=\tilde{K}^{\ast n}(t',t)\Theta(t'-t)~\text{with}\\
&\tilde{K}^{\ast n}(t',t)=\tilde{a}(t')\tilde{b}(t)\big(\tilde{\alpha}(t')-\tilde{\alpha}(t)\big)^{n-1}\!\!\times 1/(n-1)!\Big\}.
\end{align*}
We have $K^{\ast 1}=K$, which proves $P_1$. Now supposing $P_n$ true,
\begin{align*}
K^{\ast (n+1)}&=\int_{t}^{t'}\tilde{K}(t',\tau)\tilde{K}^{\ast n}(\tau,t)d\tau\,\Theta(t'-t),\\ 
&= \frac{1}{(n-1)!}\tilde{a}(t')\int_{t}^{t'}\tilde{a}(\tau)\tilde{b}(\tau)\big(\tilde{\alpha}(\tau)-\tilde{\alpha}(t)\big)^{n-1}d\tau\,\tilde{b}(t)\Theta(t'-t),
\end{align*}
where the $\tilde{K}^{\ast n}(\tau,t)$ term contributes both $\tilde{a}(\tau)$ and $\tilde{b}(t)$, while $\tilde{K}(t',\tau)$ gives rise to $\tilde{a}(t')$ and $\tilde{b}(\tau)$. Since $\tilde{a}(\tau)\tilde{b}(\tau)=\tilde{\alpha}'(\tau)$, we have that $K^{\ast (n+1)}(t',t)=\tilde{K}^{\ast (n+1)}(t',t)\Theta(t'-t)$ with
\begin{align*}
\tilde{K}^{\ast (n+1)}(t',t)&=\frac{1}{(n-1)!}\tilde{a}(t')\tilde{b}(t)\int_{t}^{t'}\tilde{\alpha}'(\tau)\big(\tilde{\alpha}(\tau)-\tilde{\alpha}(t)\big)^{n-1}d\tau,\\
&=\frac{1}{n!}\tilde{a}(t')\tilde{b}(t)\big(\tilde{\alpha}(t')-\tilde{\alpha}(t)\big)^{n}.
\end{align*}
This establishes the implication $P_n\Rightarrow P_{n+1}$ and since $P_1$ holds, $P_n$ is true for all $n\in\mathbb{N}$. Then $\sum_{n\geq 0} K^{\ast n}=1_\ast + K e^{\alpha}$ with $K e^{\alpha}=\tilde{K}e^{\tilde{\alpha}}\Theta$  and $f$ is obtained upon $\ast$-multiplying by $g$ from the right. 
\end{proof}

A considerable number of Volterra equations of important interest fall in the case where the kernel $K$ is separable: this includes the kernels encountered in the celebrated Bloch-Siegert and  Autler-Townes effects, or as shown below, for special functions such as the Heun's functions. More generally, the method of path-sums together with the results of Pleshchinskii \cite{Pleshchinskii1995}, show that the dynamics of all quantum systems driven by time-dependent forces can be determined by solely solving linear Volterra integral equation of the second kind with separable kernels. 


%

Following Aristotle who said that ``\textit{for the things we have to learn before we can do, we learn by doing}", it is essential to present examples of applications of the above approach. These are presented in Section~\ref{section:Examples} in ascending order of difficulty. 
In the remainder of the present section, we give the convergence analysis of the re-summed series.

\subsection{Convergence analysis}\label{Converge}
Here we suppose all ordinary functions $\tilde{K}_i(t',t)$ appearing in the kernel $K(t',t):=\sum_{i=1}^d \tilde{K}_i(t',t)\Theta(t'-t)=\tilde{K}(t',t)\Theta(t'-t)$ are bounded over  all compact subintervals of $I^2$ so as to guarantee convergence of the original Neumann series. In addition here we let $I=[\mathsf{T},\mathsf{T}']$ to avoid confusion with the two variables $t,t'\in I$, and thus $|I|=\mathsf{T}'-\mathsf{T}$.\\[-.5em] 

Let $f_{\text{N}}^{(n-1)}:=\sum_{k=0}^{n-1}K^{\ast k}\ast g$ be the approximation obtained from the truncating the original Neumann series at order $n-1$, $n\geq 1$. Observe that since $K(t',t)=\tilde{K}(t',t)\Theta(t'-t)$ with $\tilde{K}(t',t)$ an ordinary function, the Neumann series representation of $f$ shows that there exist ordinary functions $\tilde{f}^{(n-1)}$, $n\geq 1$ and $\tilde{f}$ such that $f^{(n-1)}(t',t)=\tilde{f}^{(n-1)}(t',t)\Theta(t'-t)$ and $f(t',t)=\tilde{f}(t',t)\Theta(t'-t)$.
Now we remark that
$$
f-f_{\text{N}}^{(n-1)}=\sum_{k=n}^\infty K^{\ast k} \ast g = K^{\ast n}\ast\sum_{k=0}^\infty K^{\ast k} \ast g = K^{\ast n} \ast f,
$$ 
defining $C_K:=\sup_{t',t\in I^2}|\tilde{K}(t',t)|$ and $C_f:=\sup_{t,t'\in I} |\tilde{f}(t',t)|$, we have the immediate bound, for $t'\geq t$, \cite{Linz1985}
$$
\sup_{t,t'\in I^2}|\tilde{f}-\tilde{f}_{\text{N}}^{(n-1)}|\leq \frac{C_f\big(C_K\,|I|\big)^{n}}{n!},
$$
which comes from the observation that the error comprises $n$, $\ast$-products.
This bound is saturated by the constant kernel $K(t',t)=C_K\Theta(t'-t)$. For the re-summed series presented here, the error produced by truncating the re-summed series at order $n-1$, $n\geq1$, is given by Eq.~(\ref{ErrorT})  
$$
f-f^{(n-1)}=T^{\ast n}\ast f.
$$
Since $T(t',t)=\tilde{T}(t',t)\Theta(t'-t)$ is bounded over all compact subintervals of $I^2$, as established by Theorem~\ref{Kleene}, there exists a constant $C_T:=\sup_{t',t\in I^2}|\tilde{T}(t',t)|$. Then we have
$$
\sup_{t,t'\in I^2}|\tilde{f}-\tilde{f}_{}^{(n-1)}|\leq \frac{C_{f}\big(C_T |I|\big)^{n}}{n!}.
$$
This establishes that the series of Corollary~\ref{ReSummedNeumann} converges faster than the original Neumann series provided $C_T<C_K$.  
Thus, to quantify any possible improvement we need to explicitly bound $C_T$. 

Given the construction of $R_K$ in terms of the $R_{K_i}$, we expect the approach proposed here to work best if one of the kernels $K_i$ is largely dominant over the others. To make this observation precise, we introduce the concept of $\ast$-domination:
\begin{definition}
Let $f(t',t)$ and $g(t',t)$ be two generalized functions of two variables defined over $I^2$. If there exist a generalized function $\epsilon(t',t)$ defined over $I^2$ such that 
$$
\epsilon\ast f= g,~\text{ with }~ C_\epsilon:=\sup_{t',t}|\epsilon(t',t)|\leq 1,
$$
then we say that $f$ is $\ast$-dominant over $g$ on $I^2$. 
The function $\epsilon$ is called the $\ast$-domination factor of $f$ over $g$. 
Introducing $C_f:=\sup_{t',t}|f(t',t)|$ and $C_g:=\sup_{t',t}|g(t',t)|$ the $\ast$-domination implies
$$
C_g \leq C_\epsilon\times C_f \times(t'-t).
$$
\end{definition}

Now suppose without loss of generality that $K_1$ is dominant over all other kernels $K_{1<i\leq d}$. Let $\epsilon_i$ be the $\ast$-domination factor of $K_1$ over $K_i$ 
Since $K=K_1+\sum_{i=2}^d K_i$, the $\ast$-domination by $K_1$ gives $K_i=\epsilon_i\ast K_1$ and therefore 
$
K=K_1+ \left(\sum_{i=2}^d \epsilon_i\right)\ast K_1.
$
In order to alleviate the notation, we define $\epsilon_K:=\sum_{i=2}^d \epsilon_i$ so that $K=K_1+\epsilon_K\ast K_1$.  
The quantity $T$ of Corollary~\ref{ReSummedNeumann} now reads
\begin{align}\label{TFormDom}
T&=1_\ast - \bigast_{i=2}^d R_{K_i}\ast R_{K_1}\ast \big(1_\ast-K_1-\epsilon_K\ast K_1\big),\nonumber\\
&=1_\ast - \bigast_{i=2}^d R_{K_i}\ast (1_\ast - R_{K_1}\ast \epsilon_K\ast K_1),\nonumber\\
&=1_\ast - \bigast_{i=2}^d R_{K_i} - \bigast_{i=1}^d R_{K_i}\ast \epsilon_K\ast K_1
\end{align}
since  $R_{K_1}\ast \big(1_\ast-K_1)=1_\ast$. These results produce a tight bound over $C_T$. To see this, we need to control the $\ast$-resolvents $R_{K_i}$:

\begin{lemma}\label{Rdomination}
 Let $K(t',t):=\tilde{K}(t',t)\Theta(t'-t)$ be a generalized function of two variables defined over $I^2$ such that $\tilde{K}$ is an ordinary bounded function over all compact subintervals of  $I^2$. Let $C_K:=\sup_{t',t\in I}|K(t',t)|$. Then the $\ast$-resolvent $R_K$ of $K$ exists and is of the form  
 $$
 R_K=1_\ast + \tilde{R}_K(t',t)\Theta(t'-t),
 $$
 with $\tilde{R}_K(t',t)$ an ordinary function bounded over all compact subintervals of $I^2$. In particular, 
 $$
\sup_{t',t\in I}|\tilde{R}_K(t',t)|\leq C_K e^{C_K |I|}.
 $$
\end{lemma}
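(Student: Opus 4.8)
The plan is to exploit the Neumann series $R_K=\sum_{n\ge0}K^{\ast n}$, whose convergence for kernels of the stated form has already been recalled in the text, and to bound its tail explicitly. First I would record that $K^{\ast 0}=1_\ast$ carries the only distributional contribution, while for every $n\ge 1$ the iterated $\ast$-product $K^{\ast n}$ is of the form $\tilde K^{\ast n}(t',t)\Theta(t'-t)$ with $\tilde K^{\ast n}$ an ordinary function, because $K$ has no $\delta(t'-t)$ component and the $\ast$-product of two $\Theta$-supported functions is again $\Theta$-supported (cf. the evaluation of the $\ast$-product given in the text). Hence $R_K=1_\ast+\tilde R_K(t',t)\Theta(t'-t)$ with $\tilde R_K:=\sum_{n\ge1}\tilde K^{\ast n}$, and the whole statement reduces to controlling this ordinary series on compact subintervals of $I^2$.

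Second, I would prove by induction on $n\ge1$ the pointwise bound, valid for $t'\ge t$ in $I$,
$$
|\tilde K^{\ast n}(t',t)|\le C_K^{\,n}\,\frac{(t'-t)^{n-1}}{(n-1)!}.
$$
The base case $n=1$ is immediate since $|\tilde K^{\ast1}|=|\tilde K|\le C_K$. For the induction step I would write $\tilde K^{\ast(n+1)}(t',t)=\int_t^{t'}\tilde K(t',\tau)\,\tilde K^{\ast n}(\tau,t)\,d\tau$, insert the inductive bound, pull $C_K$ out of the integral, and integrate $(\tau-t)^{n-1}$ to obtain $C_K^{\,n+1}(t'-t)^n/n!$. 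This is the same estimate one gets by noting, exactly as in the proof of Theorem~\ref{Kleene}, that $\Theta(t'-t)^{\ast k}=(t'-t)^{k-1}/(k-1)!\,\Theta(t'-t)$.

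Third, I would sum the resulting exponential-type majorant: for $t'\ge t$,
$$
|\tilde R_K(t',t)|\le\sum_{n\ge1}C_K^{\,n}\,\frac{(t'-t)^{n-1}}{(n-1)!}=C_K\sum_{m\ge0}\frac{\big(C_K(t'-t)\big)^m}{m!}=C_K\,e^{C_K(t'-t)}\le C_K\,e^{C_K|I|},
$$
using $0\le t'-t\le|I|$ on $I^2$. The very same estimate shows that $\sum_{n\ge1}\tilde K^{\ast n}$ converges uniformly on every compact subinterval of $I^2$, so $\tilde R_K$ is a well-defined ordinary function, bounded (indeed continuous) there, which establishes the claimed form of $R_K$ together with the routine term-by-term verification that $1_\ast+\tilde R_K\Theta$ satisfies $(1_\ast-K)\ast R_K=1_\ast=R_K\ast(1_\ast-K)$. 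The only point requiring a little care — rather than a genuine obstacle — is precisely this justification that the formal Neumann-series manipulations (term-by-term $\ast$-multiplication and telescoping) are legitimate, which is supplied by the uniform convergence just obtained.
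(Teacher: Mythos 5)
Your proposal is correct and follows essentially the same route as the paper: expand $R_K$ as the Neumann series, peel off the $1_\ast$ term, bound each $\tilde K^{\ast n}$ by $C_K^{\,n}(t'-t)^{n-1}/(n-1)!$ (the paper states this via $\ast$-domination by the constant kernel $C_K\Theta(t'-t)$, you via an explicit induction), and sum to get $C_K e^{C_K|I|}$. Your added remarks on uniform convergence and term-by-term verification of the resolvent identity are fine but do not change the argument.
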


\begin{proof}
 We have $R_K:=\sum_{n\geq 0} K^{\ast n}=1_\ast + \sum_{n\geq 1} K^{\ast n}$ converges since $K$ is bounded, which implies $\sup_{t',t\in I}| K^{\ast n}| \leq \sup_{t',t\in I} |C_K^{\ast n}|$ with
 $$
\big(C_K^{\ast n}\big)(t',t)=C_K^n \frac{(t'-t)^{n-1}}{(n-1)!},
 $$ 
 and thus
 $$
 \sup_{t',t\in I}|\tilde{R}_K|=\sup_{t',t\in I}|R_K-1_\ast|\leq \sup_{t',t\in I}\sum_{n=1}^\infty C_K^n \frac{(t'-t)^{n-1}}{(n-1)!}\leq C_K e^{C_K |I|}.
 $$
 This result is also directly obtained upon $\ast$-dominating $K$ by $C_K \Theta(t'-t)$ (this is because $C_K$ dominates $\tilde{K}(t',t)$ by definition, and $K=\tilde{K}$ whenever $t'\geq t$ while $K=0$ otherwise) and using Proposition~\ref{SingleSeparable} to get the $\ast$-resolvent of $C_K$. 
\end{proof}

Lemma~\ref{Rdomination} indicates that $R_{K_i}=1_\ast + \tilde{R}_{K_i}(t',t) \Theta(t'-t)$ and from there it follows
\begin{align}\label{Rproduct}
&\bigast_{i=1}^d R_{K_i} = 1_\ast +\sum_{n=1}^d\sum_{i_1<i_2<\cdots <i_n}\bigg(\int_{t}^{t'} \int_{\tau_1}^{t'}\cdots\int_{\tau_{n-2}}^{t'} \tilde{R}_{K_{i_1}}(t',\tau_{n-1})\cdots\\
&\hspace{15mm}\cdots \tilde{R}_{K_{i_{n-1}}}(\tau_2,\tau_1)\tilde{R}_{K_{i_n}}(\tau_1,t)d\tau_{n-1}\cdots d\tau_{2}d\tau_1\bigg)\Theta(t'-t),\nonumber
\end{align}
and here all $i_{1\leq j\leq n}$ indices are integers with $1\leq i_j\leq d$. Then Lemma~\ref{Rdomination} gives
\begin{align*}
\sup_{t',t\in I}\left|\bigast_{i=1}^d R_{K_i}-1_\ast\right|&\leq \sum_{n=1}^d\sum_{i_1<\cdots <i_n}\prod_{j=1}^n(C_{K_{i_j}})\times e^{\sum_{j=1}^n C_{K_{i_j}} |I|}.
\end{align*}
where we recall that $C_{K_i}:=\sup_{t',t\in I}|K_i|$.
The $\ast$-domination of $K_i$ by $K_1$ gives the bounds $C_{K_i}\leq C_{\epsilon_i} C_K$ and thus, with $C_\epsilon:=\max_{1<i\leq d} C_{\epsilon_i}$,
\begin{align*}
\sup_{t',t\in I}\left|\bigast_{i=1}^d R_{K_i}-1_\ast\right|&\leq\sum_{n=1}^{d} \binom{d}{n}(C_{\epsilon} C_{K_1})^n e^{n C_{\epsilon} C_K |I|},\\
&\leq\left(C_\epsilon C_{K_1}  e^{C_\epsilon C_{K_1} |I|}+1\right)^d-1
\end{align*}
and similarly 
$$
\sup_{t',t\in I}\left|\bigast_{i=2}^d R_{K_i}-1_\ast\right|\leq \left(C_\epsilon C_{K_1}  e^{C_\epsilon C_{K_1} |I|}+1\right)^{d-1}-1.
$$
Returning to Eq.~(\ref{TFormDom}), the above bounds give
\begin{align*}
C_T&\leq \left(C_\epsilon C_{K_1}  e^{C_\epsilon C_{K_1} |I|}+1\right)^{d-1}-1\\
&\hspace{15mm}+d\left(\left(C_\epsilon C_{K_1}  e^{C_\epsilon C_{K_1} |I|}+1\right)^{d}-1\right)C_\epsilon C_{K_1}.
\end{align*}
Here we used that $\sup_{t',t\in I}|\epsilon_K|\leq d\, C_\epsilon$. In particular $C_T\to 0$ as $C_\epsilon\to 0$, that is as $K\to K_1$ as expected. Whether or not this is an improvement, i.e. whether or not $C_T<C_K$, is now seen to depend on the precise relation between the $K_{i}$ and $K$ as it influence both $C_\epsilon$ and the relation between $C_K$ and $C_{K_1}$. If the kernels $K_i$ are large and of similar magnitude  (yielding $C_\epsilon\sim 1$) but such that important cancellations produce a very small overall sum kernel $K$ (for which $C_K$ is small), then clearly these cancellations are missed by Corollary~\ref{ReSummedNeumann} and convergence is slower than in the original Neumann series. At the opposite, if $K$ is largely dominated by a single kernel $K_1$, then the present approach becomes arbitrarily good as $K\to K_1$. This is for example the case in most physics applications, where $K$ results from a \textit{perturbation} of some sort around a dominant term $K_1$.

\section{Illustrative examples}\label{section:Examples}
We provide two examples of application of our main result, Corollary~\ref{ReSummedNeumann}, in ascending order of difficulty: i) solution to the linear Volterra equation of the second kind with constant kernel and ii) a new representation of Heun's confluent functions. 

\begin{example}[Constant kernel]
Let $a$ and $b$ be two constants and $K(t',t):=(a+b)\Theta(t'-t)$ be the kernel of the homogenous linear Volterra equation of the second kind $f=1_\ast + K\ast f$, that is 
$$
f(t',t)=\delta(t'-t)+\int_{t}^{t'} (a+b) \tilde{f}(\tau,t)d\tau\,\Theta(t'-t).
$$ 
The exact solution is given by the generalized function 
$$
f(t',t)=\delta(t'-t) + (a+b) e^{(a+b)(t'-t)}\Theta(t'-t).
$$
The ordinary Neumann series for this solution is the Taylor expansion
$$
f(t',t) = \delta(t'-t) + \left(a+b+ (a+b)(t'-t)+(a+b)\frac{(t'-t)^2}{2}+\cdots\right)\Theta(t'-t)
$$
Now let $K:=K_1+K_2$, $K_1(t',t):=a\Theta(t'-t)$ and $K_2(t',t):=a\Theta(t'-t)$. Using Remark~\ref{RemSingleSep} for the forms of $R_{K_1}$ and $R_{K_2}$, Corollary~\ref{ReSummedNeumann} dictates that 
\begin{align}\label{FExpressEx1}
f(t',t)&=\sum_{k=0}^\infty T^{\ast k} \ast\big(1_\ast + a e^{a(t'-t)}\Theta(t'-t)\big)\ast \big(1_\ast + b e^{b(t'-t)}\Theta(t'-t)\big).
\end{align}
Now
\begin{align*}
&\big(1_\ast + a e^{a(t'-t)}\Theta(t'-t)\big)\ast \big(1_\ast + b e^{b(t'-t)}\Theta(t'-t)\big)\\
&\hspace{5mm}=1_\ast + \left(a e^{a(t'-t)}+b e^{b(t'-t)}+\int_{t}^{t'} a e^{a(t'-\tau)}be^{b(\tau-t)} d\tau\right)\Theta(t'-t),\\
&\hspace{5mm}=1_\ast + \left(a e^{a(t'-t)}+b e^{b(t'-t)}+\frac{a b}{a-b} \left(e^{a (t'-t)}-e^{b (t'-t)}\right)\right)\Theta(t'-t),\\
&\hspace{5mm}=\Big(1_\ast + \frac{a^2 e^{a (t'-t)}-b^2 e^{b (t'-t)}}{a-b}\Theta(t'-t)\Big).
\end{align*}
Furthermore, in Eq.~(\ref{FExpressEx1}), $T$ is 
\begin{align*}
T&=1_\ast -\big(1_\ast + a e^{a(t'-t)}\Theta(t'-t)\big)\ast \big(1_\ast + b e^{b(t'-t)}\Theta(t'-t)\big)
\\
&\hspace{40mm}\ast\big(1_\ast - a\Theta(t'-t)-b\Theta(t'-t)\big)\\
&=1_\ast -\Big(1_\ast + \frac{a^2 e^{a (t'-t)}-b^2 e^{b (t'-t)}}{a-b}\Theta(t'-t)\Big)\\
&\hspace{40mm}\ast\big(1_\ast - a\Theta(t'-t)-b\Theta(t'-t)\big)\\
&=1_\ast - 1_\ast +(a+b)\Theta(t'-t)-\frac{a^2 e^{a (t'-t)}-b^2 e^{b (t'-t)}}{a-b}\Theta(t'-t)\\
&\hspace{20mm}+\int_{t}^{t'}\frac{a^2 e^{a (t'-\tau)}-b^2 e^{b (t'-\tau)}}{a-b}(a+b)d\tau
\,\Theta(t'-t)\\
&=(a+b)\Theta(t'-t)-\frac{a^2 e^{a (t'-t)}-b^2 e^{b (t'-t)}}{a-b}\Theta(t'-t)\\
&\hspace{20mm}+\frac{a+b}{a-b} \left(a e^{a
   (t'-t)}-be^{b
   (t'-t)}-(a-b)\right)\Theta(t'-t)\\
&=\frac{a b  \left(e^{a (t'-t)}-e^{b (t'-t)}\right)}{a-b}\Theta(t'-t).
\end{align*}
Therefore even at order 0, i.e. with the truncation $\sum_{k=0}^\infty T^{\ast k} \rightarrow T^{\ast 0}=1_\ast$, we get
\begin{align*}
f^{(0)}(t',t) &:= \delta(t'-t) + \frac{a^2 e^{a (t'-t)}-b^2 e^{b (t'-t)}}{a-b}\Theta(t'-t).
\end{align*}
while the order 0 approximation as computed by the original Neumann series is simply $\delta(t'-t)$.
Note that the result above, as well as $T$ and the series representation of $f$ are all actually well defined in the limit $a\to b $ owing to algebraic simplifications; while in the limits $a/b\to 0$ or $b/a\to 0$ the 0th order as given by Corollary~\ref{ReSummedNeumann} becomes exact, as expected.
At order 1, we get 
\begin{align*}
&f^{(1)}(t',t) := \left(1_\ast + T\right)\ast\Big(1_\ast + \frac{a^2 e^{a (t'-t)}-b^2 e^{b (t'-t)}}{a-b}\Theta(t'-t)\Big),\\
&=\delta(t'-t)+\frac{a+b}{a-b} \left(a e^{a (t'-t)}-b e^{b (t'-t)}\right)\Theta(t'-t)\\
&\hspace{5mm}+\frac{a b}{(a-b)^3} e^{ -(a+b)t}  \left(e^{a t+b t'} \left(a^2+a b^2 (t'-t)-b^3
   (t'-t)+b^2\right)\right.\\
   &\hspace{26mm}\left.-e^{a t'+b t} \left(-a^2(a-b) (t'-t)+a^2+b^2\right)\right)\Theta(t'-t).
\end{align*}
Once again, this is well defined in the limit $a\to b$ in spite of the $a-b$ in denominators, owing to algebraic simplifications with the numerators. It is straightforward to continue to higher orders but the resulting expressions are too cumbersome to be reported here. In contrast with the expressions obtained from the original Neumann series, e.g. $\delta(t'-t)+(a+b)$ at order 1, the superiority of Corollary~\ref{ReSummedNeumann} is obvious.  
\end{example}

\begin{example}[Heun functions]
Heun confluent functions are special transcendental functions known from general relativity and astrophysics \cite{Hortacsu2018} as well as quantum optics \cite{xie2010}. Heun's functions are known only from the context of differential equations \cite{Ronveaux1995, DLMF}, being defined as the solution to Heun’s differential equation. No integral equation satisfied by these functions is known so far. To quote a recent review \cite{Hortacsu2018} on Heun's functions, the current state of knowledge is as follows :

\textit{``No example has been given of a solution of Heun’s equation expressed in the form of a definite integral or contour integral involving only functions which are, in some sense, simpler.[...] This statement does not exclude the possibility of having an infinite series of integrals with `simpler' integrands''.}\\[-.5em]

Here we give two Volterra equations of the second kind for which Heun's confluent function is the solution, implying two infinite series of integrals converging to Heun's function via Corollary~\ref{ReSummedNeumann}.\\[-.5em]

We start with the work of Xie and Hai  \cite{xie2010}, who considered the system of coupled linear differential equations with non-constant coefficients, 
\begin{subequations}\label{HeunSystem}
\begin{align}
2i\omega\frac{d}{dt}a(t) &= \nu b(t)+f(t) a(t),\\
2i\omega\frac{d}{dt}b(t) &= \nu a(t)-f(t)b(t),
\end{align}
\end{subequations}
where $f(t):=f_1 \sin(\omega t)$ and $f_1$, $\nu$ and $\omega$ are real parameters originating from a quantum Hamiltonian. The authors showed that this system is equivalent to Heun's equation, hence is solved by certain Heun confluent functions, here denoted $H_c$. For example, for $a(t)$ we have \cite{xie2010},
\begin{align*}
a(t)&=c_1e^{i (f_1/\omega)\sin^2(\omega t/2)}\,H_c\!\left(\alpha,\beta,\gamma,\delta,\eta,\sin^2(\omega t/2)\right)\\
&\hspace{5mm}+c_2e^{i (f_1/\omega)\sin^2(\omega t/2)}\sin(\omega t/2)\,H_c\!\left(\alpha,-\beta,\gamma,\delta,\eta,\sin^2(\omega t/2)\right),
\end{align*}
where $c_1$ and $c_2$ are constants determined from the initial conditions and $\alpha:=2i f_1/\omega$, $\beta=\gamma=-1/2$, $\delta=if_1/\omega$ and $\eta=-if_1/(2\omega)+3/8-\nu^2/(4\omega^2)$.\\[-1em]

The system of Eq.~(\ref{HeunSystem}) is easily mapped into a homogenous linear Volterra equation of the second kind via the method of path-sum \cite{Giscard2015}. This gives, for $c_1=1$ and $c_2=0$, 
\begin{align}\label{equationHeun}
a(t)|_{c_1=1,c_2=0}&= \int_0^t \Big(1_\ast +(i/2)f -(\nu^2/4)R \Big)^{\ast-1}\!(\tau,0) \,d\tau,
 \end{align}
 where $R:=1\ast F\ast1=\int_t^{t'}\int_{\tau_1}^{t'}F(\tau_2,\tau_1)d\tau_2d\tau_1\,\Theta$ and 
$$ 
F(t',t) := \big(1_\ast - (i/2) f\big)^{\ast-1} =\delta(t'-t)+\frac{i}{2} \sin (\omega t') e^{-i/(2 \omega)
   \big(\cos(\omega t')-\cos(\omega t)\big)}\Theta,
$$    
as per Proposition~\ref{SingleSeparable}. Here again and from now on, we omit the $(t'-t)$ arguments of the Heaviside theta functions to alleviate the equations.
The part of the solution with $c_1=0$ and $c_2=1$ is obtained from the above by $\ast$-multiplication of $\dot{a}(t)_{c_1=1,c_2=0}$ with $q(t',t):=(\nu/2)\int_{t}^{t'}F(t',\tau)d\tau\,\Theta$, as dictated again by path-sum \cite{Giscard2015}.

The integral equation of interest here is that satisfied by $\dot{a}$, namely
\begin{equation}\label{VolterraHeun}
\dot{a} = 1_\ast +\left(-(i/2)f +(\nu^2/4)R\right) \ast \dot{a}
\end{equation}
This is precisely a homogeneous linear Volterra integral equation of the second kind with separable kernel $K(t',t) := \tilde{K}(t',t)\Theta(t'-t)$, where  
$$
\tilde{K}(t',t)=-\frac{i}{2}f(t')+\frac{\nu^2}{4}s(t')-\frac{\nu^2}{4}s(t),
$$
with $s:=\int \int_{\tau_1}^{t'}F(\tau_2,\tau_1)d\tau_2d\tau_1$ so that $R(t',t)=\big(s(t')-s(t)\big)\Theta$. Already this is in important result in the study of Heun functions for which not only was there no known Volterra equation whose solution involved Heun functions, but also because now that such an equation is known the ordinary Neumann series $\sum_{k=0}^\infty K^{\ast k}$ provides a novel series representation of these functions. 
For us, it is convenient to see the kernel $K$ as the sum of two kernels $K_i(t',t)=\tilde{K}_i(t',t)\Theta(t'-t)$, $i=1,2$, with 
\begin{align*}
\tilde{K}_1(t',t)&\equiv \tilde{K}_1(t'):=-\frac{i}{2}f(t')+\frac{\nu^2}{4}s(t'),\\
\tilde{K}_2(t',t)&\equiv\tilde{K}_2(t):=-\frac{\nu^2}{4}s(t).
\end{align*}
Indeed, given that $\tilde{K}_1(t')$ and $\tilde{K}_2(t)$ effectively depend on a single time variable each, the corresponding resolvents $R_{K_1}$ and $R_{K_2}$ are given by Proposition.~(\ref{SingleSeparable}):
\begin{align*}
R_{K_1}(t',t)&=\delta(t'-t)+\left(-\frac{i}{2}f(t')+\frac{\nu^2}{4}s(t')\right)e^{-\frac{i}{2}\int_{t}^{t'}f(\tau)d\tau+\frac{\nu^2}{4}\int_{t}^{t'} s(\tau)d\tau}\Theta,\\
R_{K_2}(t',t)&=\delta(t'-t)-\frac{\nu^2}{4}s(t)e^{-\frac{\nu^2}{4}\int_{t}^{t'} s(\tau) d\tau}\Theta,
\end{align*}
where $\Theta\equiv \Theta(t'-t)$. For convenience, in the following we designate $\tilde{R}_{K_i}(t',t)$ the ordinary part of the $R_{K_i}$ so that $R_{K_i}=\delta(t'-t)+\tilde{R}_{K_i}(t',t)\Theta(t'-t)$.

We can now turn to using Corollary~\ref{ReSummedNeumann} with $d=2$, gaining yet another new series representation of Heun functions, so far expressed solely through a power series expansion around the origin \cite{Slavyanov2000, Ronveaux1995}. 

In order to alleviate the notation, we introduce $\tilde{R}_{12}(t',t)$ such that $R_{K_1}\ast R_{K_2}= \delta(t'-t) + \tilde{R}_{12}(t',t)\Theta(t'-t)$ and thus
\begin{align*}
\tilde{R}_{12}(t',t)&=\tilde{R}_{K_1}(t',t)+\tilde{R}_{K_2}(t',t)+\int_{t}^{t'}\tilde{R}_{K_1}(t',\tau)\tilde{R}_{K_2}(\tau,t)d\tau,\\
&=\left(-\frac{i}{2}f(t')+\frac{\nu^2}{4}s(t')\right)e^{-\frac{i}{2}\int_{t}^{t'}f(\tau)d\tau+\frac{\nu^2}{4}\int_{t}^{t'} s(\tau)d\tau}\\
&\hspace{10mm}-\frac{\nu^2}{4}s(t)e^{-\frac{\nu^2}{4}\int_{t}^{t'} s(\tau) d\tau}\\
&\hspace{15mm}+\left(-\frac{i\nu^2}{8} f(t')s(t)-\frac{\nu^4}{16}s(t')s(t)\right)\times\\
&\hspace{20mm}\int_{t}^{t'}e^{-\frac{i}{2}\int_{\tau'}^{t'}f(\tau)d\tau+\frac{\nu^2}{4}\int_{\tau'}^{t'} s(\tau)d\tau} e^{-\frac{\nu^2}{4}\int_{t}^{\tau'} s(\tau) d\tau}d\tau'.
\end{align*}
Then the quantity $T=1_\ast-R_{K_2}\ast R_{K_1}\ast (1_\ast - K_1-K_2)$ is here found to be $T(t',t)=\tilde{T}(t',t)\Theta(t'-t)$ with 
\begin{align*}
\tilde{T}(t',t)&=\tilde{R}_{12}(t',t)+\int_{t}^{t'}\tilde{R}_{12}(t',\tau)\left(\tilde{K}_1(\tau,t)+\tilde{K}_2(\tau,t)\right)d\tau.
\end{align*}
With this, Corollary~\ref{ReSummedNeumann}, Remark~\ref{Order} and Eq.(~\ref{equationHeun}) give 
\begin{align*}
&e^{i (f_1/\omega)\sin^2(\omega t/2)}\,H_c\!\left(\alpha,\beta,\gamma,\delta,\eta,\sin^2(\omega t/2)\right)\Theta(t)\\
&\hspace{10mm}=\int_{0}^t\left(\sum_{k=0}^{\infty} T^{\ast k}\ast K_1\ast K_2\right)(\tau',0)d\tau',\\
&\hspace{10mm}=\int_{0}^t\Big(\left(1_\ast + T + T^{\ast 2}+T^{\ast 3} + \cdots\right)\ast K_1\ast K_2\Big)(\tau',0)d\tau'.
\end{align*}
Because $\tilde{K_1}$ depends only on $t'$ while $\tilde{K}_2$ depends only on $t$, $K_1\ast K_2$ simplifies to  $(K_1\ast K_2)(t',t)= \tilde{K}_1(t')\tilde{K}_2(t)(t'-t)\Theta(t'-t)$ and therefore
\begin{align*}
&e^{i (f_1/\omega)\sin^2(\omega t/2)}\,H_c\!\left(\alpha,\beta,\gamma,\delta,\eta,\sin^2(\omega t/2)\right)\\
&\hspace{3mm}=\int_{0}^{t}\tilde{K}_1(\tau')\tau' d\tau'\,\tilde{K}_2(0)\\
&\hspace{5mm}+\int_{0}^{t}\int_{0}^{\tau'}\tilde{T}(\tau',\tau_1) \tilde{K}_1(\tau_1)\tau_1\, d\tau_1 d\tau'\, \tilde{K}_2(0)\\
&\hspace{5mm}+\int_{0}^t \int_{0}^{\tau'}\int_{\tau_1}^{\tau'}\tilde{T}(\tau',\tau_2)\tilde{T}(\tau_2,\tau_1)\tilde{K}_1(\tau_1)\tau_1\, d\tau_2 d\tau_1 d\tau'\,\tilde{K}_2(0)\\
&\hspace{5mm}+\int_{0}^t \int_{0}^{\tau'}\int_{\tau_1}^{\tau'}\int_{\tau_2}^{\tau'}\tilde{T}(\tau',\tau_3)\tilde{T}(\tau_3,\tau_2)\tilde{T}(\tau_2,\tau_1)\tilde{K}_1(\tau_1)\tau_1\,d\tau_3 d\tau_2 d\tau_1 d\tau'\,\tilde{K}_2(0)\\
&\hspace{5mm}+\cdots
\end{align*} 
here we have removed the $\Theta(t)$ factors appearing in all terms on both the left and right hand-sides, however this means that the above expansion is valid only for $t\geq 0$. 
\end{example}

\section{Conclusion}
In this note, we presented a novel, analytically computable series for the solution of inhomogenous linear Volterra equations of the second kind with arbitrary sum kernel. Equivalently, we expressed the solution of the equation $f=g+K\ast f$ with kernel $K=\sum_i K_i$ in terms of the solutions of the equations $f_i=\delta + K_i \ast f_i$. These results 
completely bypass the standard strategy consisting in turning the Volterra equation in a system of coupled linear differential equations with non-constant coefficients, with the aim of producing explicit, computable, analytical expressions. As an illustration, we obtained an hitherto unknown Volterra linear integral equation of the second kind satisfied by Heun's confluent functions and a novel series representation for these functions.

\bibliographystyle{plain}

\end{document}